\documentclass[12pt,onecolumn]{IEEEtran}
\usepackage{subfigure, epsfig, color}
\usepackage{amsmath}
\usepackage{amsthm}
\newtheorem{theorem}{Theorem}[section]
\newtheorem{lemma}[theorem]{Lemma}

\begin{document}

\title{Optimizing the Number of Fog Nodes for $\textit{Cloud-Fog-Thing}$ Networks}
\author{
  \IEEEauthorblockN{
    Eren Balevi, and
    Richard D. Gitlin\\} 
    \IEEEauthorblockA{Department of Electrical Engineering, University of South Florida \\
		Tampa, Florida 33620, USA\\
      erenbalevi@mail.usf.edu, richgitlin@usf.edu }
}
\maketitle 
\normalsize

\begin{abstract}
Going from theory to practice in fog networking raises the question of the optimum number of fog nodes that will be upgraded from the existing nodes. This paper finds the optimum number of fog nodes for a given total number of ordinary nodes residing in the area of interest for different channel conditions. Determining the optimum number of fog nodes is quite beneficial, because it can strongly affect the SINR, and thus the average data rate and transmission delay. The numerical results indicate that the average data rate increases nearly an order of magnitude for an optimized number of fog nodes in case of shadowing and fading. It is further shown that the optimum number of fog nodes does not increase in direct proportion to the increase in the total number of nodes. Furthermore, the optimum number of fog nodes decreases when channels have high path loss exponents. These findings suggest that the fog nodes must be selected among those that have the highest computation capability for densely deployed networks and high path loss exponents channels.
\end{abstract}

\begin{IEEEkeywords}
Fog networking, hierarchical networks, SINR, average data rate, transmission delay.
\end{IEEEkeywords}

\section{Introduction}
A multitude of applications from augmented reality to online gaming and use cases from autonomous vehicles to smart cities in IoT/5G wireless networks are expected to produce an extraordinary increase in the amount of data. Although such a large-scale increase in data can be processed to some extent by cloud computing, the continuously growing amount of data cannot be tackled solely by cloud computing and fog computing has emerged as a promising method to accommodate the expected demands \cite{Bonomi}. Combining the large-scale data processing capability of cloud computing with the location aware, widely geographically distributed, low latency data processing capability of fog computing, is expected to be an attractive approach \cite{Bonomi}-\cite{YiFog}. This integration of cloud and fog is quite useful to process some portion of data in the network by fog computing, while processing the rest of data with cloud computing. The complementary nature of cloud and fog processing in data processing comprises a hierarchical network architecture dubbed a $\textit{cloud-fog-thing}$ network model \cite{Bonomi}-\cite{YiFog}. This architecture can be seen as a good compromise between fully centralized cloud networking and fully distributed fog networking.

Maximizing the average signal-to-interference-plus-noise ratio (SINR) of the promising $\textit{cloud-fog-thing}$ network is of paramount importance for future applications and use cases, because this can enhance the average data rate, and thus improve the transmission delay that leads to a decrease in latency, which is a significant impact on the quality of user experience. In this regard, it is important to optimize the number of fog nodes to maximize the average SINR and data rate so as to minimize the transmission delay. Finding the optimum number of fog nodes, which will be upgraded from the existing nodes that have communication, computation and storage capability \cite{Bonomi},\cite{Chiang}, will further enhance the understanding and impact of the $\textit{cloud-fog-thing}$ architecture. To illustrate, there can be many potential nodes inside a network that will be upgraded to fog nodes. However, it is not clear why one does not update all the potential nodes to fog nodes to exploit all the available unused resources in the network, i.e., what is the incentive behind this? 

The idea of fog networking is clearly outlined with its benefits in \cite{Bonomi},\cite{Chiang}. Furthermore, the hierarchical $\textit{cloud-fog-thing}$ network is justified with different examples in \cite{Tang}-\cite{YiFog}. In addition, fog computing based radio access networks (RAN) are discussed in \cite{Peng},\cite{Hung}. One of the primary ideas common to all these papers is to upgrade some number of nodes into a fog node. However, the optimum number of nodes that will be upgraded to fog nodes as well as the incentive of not upgrading all nodes to a fog node are not stated. This study aims to fulfill this gap in the literature of the $\textit{cloud-fog-thing}$ network architecture. 

A stochastic geometry analysis is used to determine the optimum number of fog nodes that will be upgraded from the given number of ordinary nodes within the area of interest. It is crucial to emphasize that the widely used Poisson Point Process (PPP) model in stochastic geometry is not applicable for this case for two reasons. First, the PPP gives accurate models only for large-scale networks \cite{Banani} whereas a local fog network, e.g., residing in a park covers a local area, which constitutes a low-to-medium scale network. Second, more importantly, the total number of nodes is known in our case, which turns the Poisson process into a Binomial process. Note that the aforementioned hierarchical network topology is infrastructure based, and hence the total numbers of nodes is known. As a result, the Binomial Point Process (BPP) better represents the low-to-medium scale network whose total number of nodes is known \cite{Srinivasa}. 

The optimum number of fog nodes will be found by assuming that each node elects itself as a fog node with some probability, $p$. Then, the number of fog nodes becomes $np$, if there are $n$ number of nodes within the area of interest. The same approach is used to determine the cluster-heads or leaders of each cluster in wireless sensor networks \cite{Heinzelman}-\cite{Youssef2}, however, all those papers assume that the probability of being a cluster-head is given as $\textit{a priori}$ information instead of determining this by analysis. \cite{Bandyopadhyay} determines the optimum cluster-head probability using a PPP model in terms of energy efficiency for wireless sensor networks, which has some different notions than fog networking and is quite different than the situation addressed in this paper where the probability of being a fog node is found using BPP model. 

As stated in a recent survey paper, determining the optimum number of fog nodes is uncharted though finding it is quite important, and affect the overall network efficiency \cite{YMao}. Based on this motivation, the optimum number of fog nodes is determined for channels with different path loss exponents using a BPP model. Interestingly, our analysis indicates that too large or too small number of fog nodes decreases the performance. In addition, how the fog nodes scale with the incremental total number of nodes for different channels is quantified. As an additional benefit, the optimum number of nodes that can be controlled by a fixed number of fog nodes will also be found in this paper. This analysis might be useful in the design of the efficient virtual machines in the cloud, finding the value of $K$ in $K$-means clustering algorithm, which may be used to find the optimum locations of fog nodes, as well as enhancing the caching efficiency.

The paper is organized as follows. The network model and the problem statement are presented in Section \ref{Problem Statement}. In Section \ref{Problem Formulation}, the problem is formulated to find the optimum number of fog nodes. Section \ref{Analysis} introduces a stochastic geometry analysis for a BPP model. The derived closed-form derivations are validated in Section \ref{Validation} and the benefits and planned future research are given in Section \ref{Future}. The paper ends with the concluding remarks in Section \ref{Conclusions}.

\section{Network Model and Problem Statement}\label{Problem Statement}
Various applications in 5G and beyond wireless networks require an interplay between cloud and fog networks. Accordingly, some portion of data is processed at the fog networks and the remaining portion of data, i.e., filtered data is conveyed to the cloud. In this model, the inherent features of fog layer such as widely deployed geographical distribution and location awareness is associated with the large-scale data management capability of the cloud layer. The system of smart traffic light is one example that illustrates the interplay between cloud and fog networks so that distributed traffic lights connected to each other as well as vehicles, pedestrians and bikes intelligently control the traffic \cite{Bonomi}. Another example is a smart pipeline monitoring system in which the combination of fog and cloud networks sequentially process the data coming from the massive number of sensors \cite{Tang}. The same network model is highlighted in \cite{Chiang}, \cite{Luan} as well. In these papers, the network model is composed of the hierarchical combination of fog and cloud termed as $\textit{cloud-fog-thing}$ network as depicted in Fig. \ref{fig:NetworkModel}. Here, the fog layer is composed of many local fog networks located at parks, shopping malls, restaurants to name a few, where fog networks are made up of fog nodes that have communication, computation and storage capability, and emerged by updating the existing nodes in the network \cite{Bonomi}, \cite{Chiang}. The thing layer involves the end devices that may be various type of sensors, IoT devices or mobile phones.  
\begin{figure} [!h] 
\centering 
\includegraphics [width=3.5in]{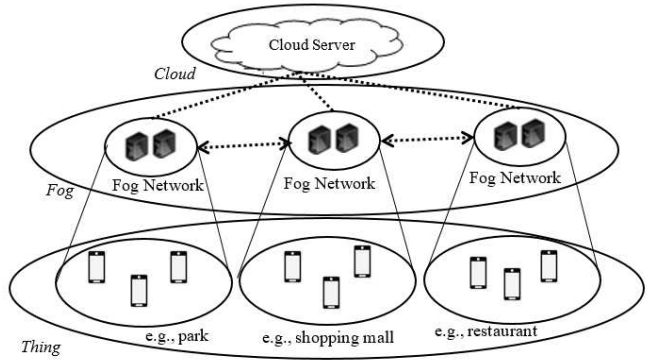}
\caption{$\textit{Cloud-fog-thing}$ type hierarchical network model}\label{fig:NetworkModel}
\end{figure}

Bearing in mind the overall network structure and the general notion of fog networking so that any existing node in the network can be a fog node, which is accepted by both industry \cite{Bonomi} and academy \cite{Chiang}, raise some fundamental questions. In this sense, this paper addresses the optimum number of fog nodes that should be upgraded from the existing nodes in the network. Assume that there is a square planar whose one side is $2a$, i.e., from -$a$ to $a$ and the cloud is located at the center, and the $n$ number of nodes are randomly and uniformly distributed around the cloud. At the beginning, these nodes are assumed to be ordinary, and then, some of them are specialized as fog nodes that constitute the fog layer and some of them remains ordinary that constitute the thing layer. To find the number of fog nodes, our approach is as follows. Each node selects itself as a fog node with probability $p$, and this yields $n_0$ and $n_1$ number of ordinary nodes and fog nodes respectively as $n_0=n(1-p)$, and $n_1=np$.

Needless to say that there has to be a criterion to determine the optimum probability of being a fog node $p$, and thus the optimum number of fog nodes. In this analysis, the criterion to find the optimum number of fog nodes is to maximize the average SINR, and thus data rate so as to minimize the transmission delay. Hence, the probability of being a fog node $p$ is optimized, and the optimum values of $n_0$ and $n_1$ will be found accordingly. In general, this paper provides a mathematical framework to specify the optimum number of fog nodes under one fog network so that one can find the optimum number of fog nodes dynamically even if the total number of nodes changes. Depending on this framework, one can determine the maximum possible nodes that can be controlled by a fixed number of fog nodes as well.

\section{Problem Formulation}\label{Problem Formulation}
A stochastic geometry analysis is performed to be able to formulate the optimum number of fog nodes when the end devices send their packets to the fog nodes, which forward the data to the cloud after processing some part of it. Accordingly, fog nodes and end devices are considered as points in $2$-dimensional Euclidean space. Throughout our analysis, it is assumed that the total number of points residing in the area of interest is known, though the number may dynamically change. Additionally, the number of nodes in the fog layer and in the thing layer may change. By this is meant that some nodes in the fog layer may be downgraded to the nodes in the thing layer or vice versa depending on the change in the network geometry due to mobility, or arrival or departure of nodes in the network. A widely used PPP model to accurately model the large-scale networks for random number of nodes in stochastic geometry \cite{Haenggi} cannot be applied to this problem, because the total number of nodes is known. Indeed, this knowledge turns a PPP into a BPP model \cite{Srinivasa}. Furthermore, the sub-regions covered by fog networks are not large-scale, i.e., they may be classified as low-to-medium scale network. Relying on these factors, it is more appropriate to model the underlying network model as a BPP.

The $\textit{cloud-fog-thing}$ network architecture can be simplified as a hierarchical tree based topology for one fog network as depicted in Fig. \ref{fig:TreeModel}. Here, nodes in the thing layer are termed as end devices that constitute Tier-0, which are controlled by the fog nodes located at Tier-1 and the cloud server is situated at the top layer. Note that fog nodes are connected to each other in a circular, fully connected mesh topology, and form the fog network, which is a generic and an appropriate model consistent with the definition of a fog network \cite{Bonomi}, \cite{Chiang}. There is an interplay between the number of nodes at Tier-1 and Tier-0 so that the number of fog nodes will be dynamically determined according to the number of end devices. More specifically, suppose that there are $n_0=n(1-p)$ and $n_1=np$ number of end devices and fog nodes, respectively, and $n = n_0+n_1$. To find the relation among $n$, $n_0$, and $n_1$, the optimum probability of being fog node $p$ is found. 
\begin{figure} [!h] 
\centering 
\includegraphics [width=3.5in]{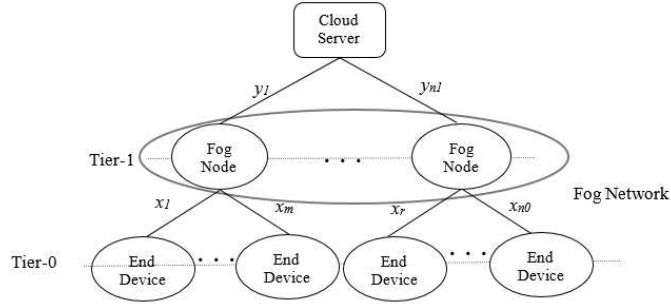}
\caption{A simplified tree based hierarchical network model}\label{fig:TreeModel}
\end{figure}

Assume that the packet size is $M$ bits and the packet is partially processed in the fog node, e.g., $K$ bits of the packet are processed, and the rest, i.e., $M-K$ bits are relayed to the cloud. Then, the one-way latency, $l$, from the end device to the cloud for this $2-$hop transmission, which is the end device-fog node-cloud, becomes
\begin{equation}\label{1}
l=\tau_{trans}+\tau_{proc} +\tau_{queue}+\tau_{prop}
\end{equation}
where $\tau_{trans}$ is the transmission delay, i.e.,
\begin{equation}\label{1.5}
\tau_{trans}=\frac{M}{R_{fog}}+\frac{M-K}{R_{cloud}}
\end{equation}
and $\tau_{proc}$ is the processing delay, $\tau_{queue}$ is the queueing delay, $\tau_{prop}$ is the propagation delay. $R_{fog}$ and $R_{cloud}$ are the data rate at the fog node and cloud as
\begin{equation}\nonumber
R_{fog}=Wlog(1+SINR_{fog})
\end{equation}
and
\begin{equation}\nonumber
R_{cloud}=Wlog(1+SINR_{cloud})
\end{equation}
where $W$ is the bandwidth, $SINR_{fog}$ and $SINR_{cloud}$ are the SINR at the fog node and cloud, respectively. More specifically, the SINR of the $i^{th}$ end device for $i=1,2,\cdots,n_0$ at the fog node becomes
\begin{equation}\label{2}
SINR_{fog}(i)=\frac{P_ih_ix_i^{-\alpha}}{\sigma^2+I_{fog}}
\end{equation}
where $P_i$ is the transmission power of the $i^{th}$ end device, $h_i$ is the channel power coefficient, $x_i$ is the distance between the end device and fog node as shown in Fig. \ref{fig:TreeModel}, $\alpha$ is the path loss coefficient, $\sigma^2$ is the noise variance and $I_{fog}$ is the residual interference power at the fog node after some interference mitigation techniques whose detailed discussion is out of scope for this paper. Notice that $I_{fog}=0$ in the idealized case, i.e., if the interference is perfectly mitigated. Similarly, the SINR due to the $j^{th}$ fog node for $j=1,2,\cdots,n_1$ at the cloud can be written as
\begin{equation}\label{3}
SINR_{cloud}(j)=\frac{P_jh_jy_j^{-\alpha}}{\sigma^2+I_{cloud}}
\end{equation}
where $P_j$ is the transmission power of the $j^{th}$ fog node, $h_j$ is the channel power coefficient, $y_j$ represents the distance between the $j^{th}$ fog node and the cloud, which is depicted in Fig. \ref{fig:TreeModel} as well. $I_{cloud}$ is the residual interference power at the cloud. Similarly, if one makes the assumption of perfect interference mitigation, it becomes $0$.

\begin{figure*}[!h] 
\begin{equation}\label{4}
J_{ij}=\min\left(\frac{1}{log(1+SINR_{fog}(i))}+\frac{1}{log(1+SINR_{cloud}(j))}\right)
\end{equation}
\end{figure*}
Consider the simple network structure that demonstrates the nodes given in Fig. \ref{fig:proof}. Here, circles represent the ordinary nodes, some of which will be upgraded to the fog nodes and the square denotes the cloud. Let's say that the distance between the circle that will not be upgraded as a fog node and be upgraded as a fog node is $\{x_i\}$ and the distance between a circle, i.e., the circle that will be upgraded to a fog node which is not known as \textit{a priori} and found after the optimization, and the square is $\{y_j\}$. In our formulation, fog nodes are selected based on $\{x_i\}$ and $\{y_j\}$, and thus the number of fog nodes are optimized accordingly. Notice that the selections of $\{x_i\}$ and $\{y_j\}$ are not affected by the processing and queueing delay, because all nodes are identical.
\begin{figure} [!h] 
\centering 
\includegraphics [width=3in]{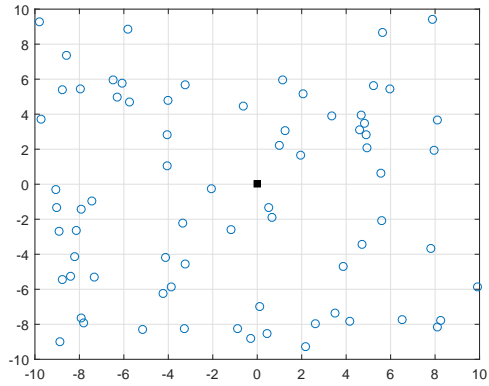}
\caption{The sample distribution of nodes for $a=10$}\label{fig:proof}
\end{figure}

Bearing in mind this network structure as well as the identical nodes, the objective function can be written as (\ref{4}) in terms of SINR that can maximize the data rate or minimize the $\tau_{trans}$ for the $i^{th}$ end device that transmits to the $j^{th}$ fog node, because $W$, $M$, $M-K$ are constant values. It is important to emphasize that $\tau_{proc}$ and $\tau_{queue}$ do not affect this optimization problem, because all nodes are identical. For the sake of simplicity, $\tau_{prop}$ is omitted as well, since the optimization that minimizes the right-hand side of (\ref{4}) with respect to distances can automatically minimize $\tau_{prop}$. Notice that the aim of this optimization is to find the fog numbers in a given area in terms of SINR instead of perfectly quantifying the latency. Hence, these simplifications are quite fair. 

Since the logarithm is a monotone function, (\ref{4}) is equivalent to
\begin{equation}\label{5}
\hat{J}_{ij}=\min\left(\frac{\sigma^2+I_{fog}}{P_ih_ix_i^{-\alpha}}+\frac{\sigma^2+I_{cloud}}{P_jh_jy_j^{-\alpha}}\right).
\end{equation}
Taking the expected value of (\ref{5}) produces
\begin{equation}\label{6}
\hat{J}_{ij}(avg)=\min\left(\frac{\sigma^2+I_{fog}}{P_i}E\left[\frac{x_i^{\alpha}}{h_i}\right]+\frac{\sigma^2+I_{cloud}}{P_j}E\left[\frac{y_j^{\alpha}}{h_j}\right]\right)
\end{equation}
for given $P_i$, $P_j$, $\sigma^2$, $I_{fog}$ and $I_{cloud}$, which may be either given as $\textit{a priori}$ information or estimated at the receiver, and thus they do not impress the optimization. Also, channel power coefficients are independent from distances that lead to
\begin{equation}\label{7}
\tilde{J}_{ij}(avg)=\min\left(E[x_i^{\alpha}]E\left[\frac{1}{h_i}\right]+E[y_j^{\alpha}]E\left[\frac{1}{h_j}\right]\right)
\end{equation}
where $E[1/h_i]=c_i$ and $E[1/h_j]=c_j$ such that $c_i$ and $c_j$ are constant values. This yields
\begin{equation}\label{8}
J_\alpha^{single}=\min\left(E[x_i^{\alpha}]+E[y_j^{\alpha}]\right).
\end{equation}

Solving (\ref{8}) gives one end device for one fog node that maximizes average SINR while a packet is sent from an end device to the cloud through a fog node. Since there are $n_0$ number of end devices and $n_1$ number of fog nodes, the objective function is defined as
\begin{equation} \label{11}
J_\alpha=\min \left(\sum_{i=1}^{n_0}E[x_i^{\alpha}]+\sum_{j=1}^{n_1}E[y_j^{\alpha}]\right)
\end{equation}
assuming that packets coming from the end devices to the fog nodes are aggregated, partially processed and relayed to the cloud. Since $n_0=n(1-p)$ and $n_1=np$, (\ref{11}) is optimized with respect to $p$, i.e., the value of $p$ that minimizes (\ref{11}) gives the number of fog nodes that will be upgraded from the ordinary nodes, which are randomly spatially distributed within the area of interest.

\section{The Optimum Number of Fog Nodes}\label{Analysis}
The number of fog nodes for each fog network can be optimized with respect to the objective function in (\ref{11}). In the model, it is assumed that there are $n$ numbers of nodes within the area of interest including $n_0$ numbers of end devices and $n_1$ numbers of fog nodes so that $n=n_0+n_1$. To find the interplay among the $n_0$ numbers of end devices and $n_1$ numbers of fog nodes, assume that the probability of being a fog node is $p$ for all $n$ nodes. This produces $n_0=n(1-p)$, $n_1=np$ numbers of end devices and fog nodes, respectively. Here, the critical point is the determination of $p$. Accordingly, first the objective function stated in (\ref{11}) will be derived as a closed-form expression in terms of $p$. Next, the objective function is optimized with respect to $p$ which determines the optimum values of $n_0$ and $n_1$. Notice that $p=0$ refers to the fact that there is no fog node whereas $p=1$ shows that all nodes must be fog nodes within the area of interest.

In particular, the values of $n_0$ and $n_1$ are optimized for $\alpha=1$, $\alpha=2$ and $\alpha=4$ in this paper. Within this aim, first a hypothetical condition is studied for $\alpha=1$. Although one can use the analysis of $\alpha=1$ as an approximation when the nodes are connected to a cable, this is physically meaningless for wireless connections. The main reason for analyzing the case for $\alpha=1$ is to better specify the relation between the optimum number of fog nodes and the path loss coefficient. Following that, the analysis is given for a free space path loss, i.e., $\alpha=2$. Lastly, a more practical case is considered for $\alpha=4$ accounting for the impact of shadowing and fading. 
\subsection{Hypothetical Path Loss}
The objective function in (\ref{11}) is first obtained as a closed-form expression in case of $\alpha=1$, which is a physically meaningless, but a mathematically meaningful quantity in wireless channels. This yields the following objective function 
\begin{equation} \label{12}
J_1= \min (x+y)
\end{equation}
where
\begin{equation} \label{13}
y=\sum_{j=1}^{n_1}E[y_j]
\end{equation}
and
\begin{equation} \label{14}
x=\sum_{i=1}^{n_0}E[x_i].
\end{equation}

All nodes are independently and uniformly distributed in a given square area of side $2a$ for $2$-dimensional Euclidean space with coordinates $(i_x, i_y)$. Depending on that, the expected distance of a fog node from the cloud can be expressed as 
\begin{equation} \label{15}
E[y_j]=\frac{1}{4a^2}\int_{-a}^{a}\int_{-a}^{a}\sqrt{i_x^2+i_y^2}d_{i_x}d_{i_y}=0.765a.
\end{equation}
Based on (\ref{15}), the total distance between the fog nodes and the cloud given in (\ref{13}) can be written for $n_1=np$ numbers of fog nodes as
\begin{equation} \label{16}
y=\sum_{j=1}^{n_1}E[y_j]=0.765npa.
\end{equation}
Following that, the average distance between two arbitrarily located points in a BPP is required to find the (\ref{14}). Specifically, the mean distance between a fog node and an end device is needed. At this point, a recently derived formula is used, which specifies the mean distance between two points for an isotropic BPP \cite{Srinivasa}, as  
\begin{equation} \label{17}
E[x_i]=\frac{ri^{1/2}}{(N+1)^{1/2}}
\end{equation}
where $r$ is the maximum range of the fog node, and $N$ is the total number of end devices controlled by a fog node, which becomes $N=(n-np)/np$ for the sake of analytical brevity. This produces
\begin{equation} \label{18}
E[x_i]=\frac{ri^{1/2}}{((n-np)/np+1)^{1/2}}.
\end{equation}

The maximum range $r$ can be specified by considering that each fog node, which is located at a center of a $2$-dimensional ball $b(o,r)$, has identical range and constitutes non-overlapping partitions without any loss of generality. This leads to
\begin{equation} \label{19}
r=\frac{\pi R}{np}
\end{equation}
where $R$ denotes the radius of the circular mesh fog network. Based on these, the sum distance between a fog node and end devices can be written as
\begin{equation} \label{20}
\sum_{i=1}^{n_0/n_1}E[x_i]=\sum_{i=1}^{n_0/n_1}\frac{\pi Ri^{1/2}}{(np)( (n-np) / np +1)^{1/2}}
\end{equation}
and the total distances due to having $np$ fog nodes can be given by
\begin{equation} \label{21}
x=\sum_{i=1}^{n_0/n_1}\frac{\pi Ri^{1/2}}{( (n-np) / np +1)^{1/2}}.
\end{equation}
After some mathematical operations, (\ref{21}) can be simplified as
\begin{equation} \label{22}
x=\frac{\pi R(n-np)}{np}.
\end{equation}
The objective function $J_1$ in (\ref{12}) can be given depending on (\ref{16}) and (\ref{22}) as
\begin{equation} \label{23}
J_1 = \frac{\pi R(n-np)}{np}+0.765npa. 
\end{equation}

\begin{lemma} \label{lemma1}
There is a unique optimum global value of $p$ that minimizes (\ref{23}).
\end{lemma}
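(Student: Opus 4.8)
The plan is to reduce $J_1$ to a function of the single scalar $p$ and then invoke elementary convexity. First I would note that $n$, $R$, and $a$ are fixed positive constants, so the common factor $n$ in the first term of (\ref{23}) cancels, giving
\[
J_1(p)=\frac{\pi R(1-p)}{p}+0.765\,nap=\frac{\pi R}{p}-\pi R+0.765\,nap,
\]
a smooth function on the physically meaningful domain $p\in(0,1]$; the endpoint $p=0$ is excluded since it corresponds to having no fog node.

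I would then compute
\[
J_1'(p)=-\frac{\pi R}{p^2}+0.765\,na,
\]
and solve $J_1'(p)=0$ to obtain the single admissible stationary point $p^\star=\sqrt{\pi R/(0.765\,na)}$, the negative root being discarded because $p>0$. Thus a candidate optimizer is explicit and is the unique zero of the derivative.

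To upgrade this stationary point to a genuine global minimum, I would compute the second derivative
\[
J_1''(p)=\frac{2\pi R}{p^3}>0\quad\text{for all }p>0,
\]
establishing that $J_1$ is strictly convex on $(0,\infty)$. Strict convexity forces any stationary point to be the unique global minimizer, which is exactly the assertion of the lemma. As a consistency check, $J_1(p)\to+\infty$ as $p\to0^+$, confirming that the minimizer is interior and not an artifact of the boundary.

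The calculus here is routine, so the only real subtlety---and the step I would treat most carefully---is feasibility: one must verify that $p^\star$ actually lies in $(0,1]$ for the parameter range of interest. If instead $p^\star>1$, then $J_1$ would be strictly decreasing on all of $(0,1]$ and the constrained optimum would move to the boundary $p=1$, which is still unique but given by a different expression. I would therefore close the argument with a brief remark that for the dense-network regime considered ($n$ large relative to $R/a$) one has $p^\star<1$, so the interior critical point is the global optimum and the uniqueness claim holds as stated.
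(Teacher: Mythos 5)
Your proof is correct and takes essentially the same route as the paper: differentiate (\ref{23}), observe that $\partial^2 J_1/\partial p^2 = 2\pi R/p^3 > 0$ gives strict convexity, and conclude that the unique stationary point $p^\star=\sqrt{\pi R/(0.765\,na)}$ (identical to the paper's (\ref{26}), since $0.765=153/200$) is the unique global minimizer. Your closing feasibility check that $p^\star$ actually lies in $(0,1]$ is a refinement the paper silently omits, and is a worthwhile addition rather than a deviation in approach.
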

\begin{proof}
The first and second derivative of (\ref{23}) with respect to $p$ can be simply written after some straightforward mathematical calculations as
\begin{equation} \label{24}
\frac{\partial J_1}{\partial p}=\frac{153anp^2-200\pi R}{200p^2}
\end{equation}
and
\begin{equation} \label{25}
\frac{\partial^2 J_1}{\partial p^2}=\frac{2\pi R}{p^3}.
\end{equation}
Since the second derivative of (\ref{25}) is greater than $0$, this means that (\ref{23}) is strictly convex function and the value of $p$ that makes (\ref{24}) $0$ is a global minimum point and unique, if it exists. Then, the question is whether a real $p$ exists or not. After some straightforward calculations, $p$ can be found as
\begin{equation} \label{26}
p=\left(\frac{200\pi R}{153an}\right)^{1/2}
\end{equation}
and hence it is a global minimum point and unique. 
\end{proof}

Due to Lemma \ref{lemma1}, the optimum number of fog nodes for a given total $n$ number of nodes can be calculated as
\begin{equation} \nonumber
n_1=\left(\frac{200\pi Rn}{153a}\right)^{1/2}.
\end{equation}
Alternatively, one can determine the optimum number of end devices in two steps if the number of fog nodes $n_1$ is known. Accordingly, first the value of $p$ is found as
\begin{equation} \nonumber
p=\frac{200\pi R}{153an_1}.
\end{equation}
Second, the number of end devices is calculated as 
\begin{equation} \nonumber
n_0=\frac{n_1}{p}-n_1.
\end{equation}
\subsection{Free Space Path Loss}
The signal power falls off with path loss exponents of $\alpha>1$ for wireless channels depending on many impediments. Considering only the effect of free space path loss causes $\alpha=2$. To find the optimum number of fog nodes for the path loss exponent of $2$, the objective function can be specified as
\begin{equation} \label{27}
J_2= \min (\tilde{x}+\tilde{y})
\end{equation}
where
\begin{equation} \label{28}
\tilde{y}=\sum_{j=1}^{n_1}E[y_j^2]
\end{equation}
and
\begin{equation} \label{29}
\tilde{x}=\sum_{i=1}^{n_0}E[x_i^2].
\end{equation}

The closed-form derivation of (\ref{28}) can be obtained as 
\begin{equation} \label{30}
E[y_j^2]=\frac{1}{4a^2}\int_{-a}^{a}\int_{-a}^{a}(i_x^2+i_y^2)d_{i_x}d_{i_y}=2a^2/3.
\end{equation}
Generalizing (\ref{30}) for all $np$ number of fog nodes produces
\begin{equation} \label{31}
\tilde{y}=\sum_{j=1}^{n_0}E[y_j^2]=2npa^2/3.
\end{equation}
On the other hand, the second moment of a distance between two arbitrarily located nodes in a BPP is derived in \cite{Srinivasa} as
\begin{equation} \label{32}
E[x_i^2]=\frac{r^2i}{(N+1)}.
\end{equation}
Integration of (\ref{32}) into our formulation gives
\begin{equation} \label{33}
\sum_{i=1}^{n_0/n_1}E[x_i^2]=\sum_{i=1}^{n_0/n_1}\frac{\pi^2R^2i}{(np)^2((n-np) / np+1)}=\frac{(n-np)\pi^2R^2}{2(np)^3}.
\end{equation}
Generalizing (\ref{33}) for all $np$ number of fog nodes results in
\begin{equation} \label{34}
\tilde{x}=\frac{(n-np)\pi^2R^2}{2(np)^2}.
\end{equation}
Depending on (\ref{31}) and (\ref{34}), the objective function in (\ref{27}) ends up as
\begin{equation} \label{35}
J_2 = \frac{(n-np)\pi^2R^2}{2(np)^2}+\frac{2npa^2}{3} 
\end{equation}

\begin{lemma} \label{lemma2}
There is a unique optimum global value of $p$ that minimizes (\ref{35}).
\end{lemma}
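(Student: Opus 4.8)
The plan is to mirror the proof of Lemma \ref{lemma1}. First I would simplify (\ref{35}) by writing $n-np=n(1-p)$, which collapses it to the single-variable function
\[
J_2=\frac{\pi^2R^2(1-p)}{2np^2}+\frac{2na^2p}{3},
\]
and then differentiate twice with respect to $p$. Routine differentiation gives
\[
\frac{\partial J_2}{\partial p}=\frac{\pi^2R^2}{2n}\left(\frac{1}{p^2}-\frac{2}{p^3}\right)+\frac{2na^2}{3}
\]
and
\[
\frac{\partial^2 J_2}{\partial p^2}=\frac{\pi^2R^2(3-p)}{np^4}.
\]

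The convexity step is then immediate: since $p$ is a probability with $p\in(0,1]$, we have $3-p>0$, so the second derivative is strictly positive on the feasible interval. Hence $J_2$ is strictly convex there, which guarantees that \emph{at most} one stationary point exists and that, if it does, it is the unique global minimizer. This is exactly the role played by (\ref{25}) in Lemma \ref{lemma1}.

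The remaining and genuinely harder step is existence of the minimizer. Setting $\partial J_2/\partial p=0$ and clearing the denominator by multiplying through by $p^3$ yields the cubic
\[
\frac{2na^2}{3}\,p^3+\frac{\pi^2R^2}{2n}\,p-\frac{\pi^2R^2}{n}=0.
\]
Unlike Lemma \ref{lemma1}, where the analogous stationarity condition was effectively quadratic and produced the clean radical (\ref{26}), here one faces a true cubic, so the main obstacle is that there is no illuminating closed form for the root. I would therefore establish existence structurally rather than by formula: letting $f(p)$ denote the left-hand side, one has $f(0)=-\pi^2R^2/n<0$, $f(p)\to+\infty$, and $f'(p)=2na^2p^2+\pi^2R^2/(2n)>0$ for all $p$, so $f$ is strictly increasing and crosses zero exactly once at a positive value $p^\star$. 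This single sign change of $\partial J_2/\partial p$ (from negative to positive) reconfirms that $p^\star$ is the unique global minimizer.

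Finally I would address the boundary, since $p^\star$ must also respect $p\le 1$. Evaluating $f(1)=\tfrac{2na^2}{3}-\tfrac{\pi^2R^2}{2n}$ shows that $p^\star\le 1$ precisely when $n$ is large enough (roughly when $4n^2a^2/3\ge\pi^2R^2$); otherwise the constrained optimum sits at the corner $p=1$, i.e.\ all nodes become fog nodes, consistent with the interpretation noted after the problem setup. In either regime the minimizer over $(0,1]$ is unique, which is what the lemma asserts. I expect the cubic root to be the only place where care is actually required, handled either by the monotonicity/intermediate-value argument above or by an explicit but unenlightening Cardano expression.
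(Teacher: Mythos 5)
Your computations match the paper's exactly: after clearing denominators, your first derivative is the paper's (\ref{36}), your second derivative is the paper's (\ref{37}), and your convexity step (positivity of $\pi^2R^2(3-p)/(np^4)$ on the feasible interval) is precisely the paper's argument. Where you genuinely diverge is the existence of the stationary point, and your instinct there is sharper than the paper's. The paper asserts that ``after some mathematical operations'' the root of (\ref{36}) is (\ref{38}), $p=\bigl(6\pi^2R^2/(4a^2n^2)\bigr)^{1/3}$, presented as if it were the exact root. It is not: substituting it into the cubic $4a^2n^2p^3+3\pi^2R^2p-6\pi^2R^2=0$ leaves a residual $3\pi^2R^2p\neq 0$, so (\ref{38}) solves only the truncated equation obtained by dropping the linear term --- an approximation that is good when $p$ is small, and one the paper itself quietly concedes in Section \ref{Validation}, where (\ref{26}), (\ref{38}), (\ref{50}) are described as analytical values ``with some approximations.'' Your replacement --- showing $f(0)<0$, $f'(p)>0$ for all $p$, and $f(p)\to+\infty$, so the cubic has exactly one positive root, at which $\partial J_2/\partial p$ changes sign from negative to positive --- is the rigorous way to obtain existence and uniqueness without a closed form, and it actually repairs this imprecision in the paper's proof. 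Your boundary analysis (the constrained optimum sits at $p=1$ when $f(1)<0$, i.e.\ when $4n^2a^2/3<\pi^2R^2$) covers a case the paper silently ignores, though in the paper's regime ($a\gg R$, $n$ large) it never binds. The one thing the paper's route buys that yours does not is the explicit expression (\ref{38}), which is reused downstream to write $n_1$ in closed form; if you adopt your argument, that formula must be reinterpreted as an approximation to $p^\star$ rather than as $p^\star$ itself.
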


\begin{proof} 
One can easily show that (\ref{35}) is a concave upward function by inspecting its second derivative. For this purpose, the first and second derivative of (\ref{35}) is consecutively written as
\begin{equation} \label{36}
\frac{\partial J_2}{\partial p}=\frac{4a^2n^2p^3+3\pi^2R^2p-6\pi^2R^2}{6np^3}
\end{equation}
and
\begin{equation} \label{37}
\frac{\partial^2 J_2}{\partial p^2}=\frac{\pi^2R^2(3-p)}{np^4}.
\end{equation}
It is clear from (\ref{37}) that the acquired objective function in (\ref{35}) is strictly convex for $0<p<1$. This means that any real root of (\ref{36}) minimizes the objective function if such kind of a $p$ exists. After some mathematical operations, it can be proved that
\begin{equation} \label{38}
p=\left(\frac{6\pi^2R^2}{4a^2n^2}\right)^{1/3}.
\end{equation}
Hence, (\ref{38}) is the unique global minimum point, which minimizes (\ref{35}). 
\end{proof}

Suppose that there are $n$ number of nodes in an area. According to Lemma \ref{lemma2}, the optimum fog number for this area becomes
\begin{equation} \nonumber
n_1=\left(\frac{6\pi^2R^2n}{4a^2}\right)^{1/3}.
\end{equation}
Analogously, if the number of fog nodes is known as $\textit{a priori}$ information, then one can easily find $n_0$ as
\begin{equation} \nonumber
n_0=\frac{n_1}{p}-n_1
\end{equation}
where
\begin{equation} \nonumber
p=\frac{6\pi^2R^2}{4a^2n_1^2}.
\end{equation}

\subsection{Shadowing and Fading}
In wireless channels, shadowing and fading are other factors that affect the path loss exponent in addition to free space path loss. Accounting for the impact of all free space path loss, shadowing and fading, it is reasonable to take $\alpha=4$ \cite{Goldsmith} and formulate the objective function accordingly. More rigorously, 
\begin{equation} \label{39}
J_4= \min (\hat{x}+\hat{y})
\end{equation}
where 
\begin{equation} \label{40}
\hat{y}=\sum_{j=1}^{n_1}E[y_j^4]
\end{equation}
and
\begin{equation} \label{41}
\hat{x}=\sum_{i=1}^{n_0}E[x_i^4].
\end{equation}

The expression in (\ref{40}) can be written similar to (\ref{15}) and (\ref{30}) as 
\begin{equation} \label{42}
E[y_j^4]=\frac{1}{4a^2}\int_{-a}^{a}\int_{-a}^{a}(i_x^2+i_y^2)^2d_{i_x}d_{i_y}=0.62a^4
\end{equation}
which results in
\begin{equation} \label{43}
\hat{y}=\sum_{j=1}^{n_0}E[y_j^2]=0.62npa^4.
\end{equation}
The fourth moment of a distance that belongs to two points in a BPP can be specified as \cite{Srinivasa}
\begin{equation} \label{44}
E[x_i^4]=\frac{r^4i^2}{(N+1)^2}.
\end{equation}
Regarding our problem, (\ref{44}) can be interpreted as 
\begin{equation} \label{45}
\sum_{i=1}^{n_0/n_1}E[x_i^4]=\sum_{i=1}^{n_0/n_1}\frac{\pi^4R^4i^2}{(np)^4((n-np) / np+1)^2}.
\end{equation}
This leads to
\begin{equation} \label{46}
\hat{x}=\sum_{i=1}^{n_0/n_1}\frac{\pi^4R^4i^2}{(np)^3((n-np) / np+1)^2}\approx\frac{\pi^4R^4n(1-p)}{(np)^4}.
\end{equation}
Due to (\ref{43}) and (\ref{46}), $J_4$ can be given by
\begin{equation} \label{47}
J_4 = \frac{\pi^4R^4n(1-p)}{(np)^4}+0.62npa^4
\end{equation}

\begin{lemma} \label{lemma3}
There is a unique optimum global value of $p$ that minimizes (\ref{47}).
\end{lemma}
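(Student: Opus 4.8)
The plan is to follow the template already used in Lemmas \ref{lemma1} and \ref{lemma2}: differentiate $J_4$ twice with respect to $p$, read off strict convexity from the sign of the second derivative on the admissible interval $0<p<1$, and then handle the first-order condition. First I would rewrite (\ref{47}) in the compact form $J_4=A\bigl(p^{-4}-p^{-3}\bigr)+Bp$ with the shorthand $A=\pi^4R^4/n^3>0$ and $B=0.62\,na^4>0$, which makes the differentiation transparent. One differentiation gives $\partial J_4/\partial p=\bigl(Bp^5+3Ap-4A\bigr)/p^5$, and a second gives $\partial^2 J_4/\partial p^2=A(20-12p)/p^6$.

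The convexity step is then immediate: since $A>0$ and $20-12p>8>0$ for every $p\in(0,1)$, the second derivative is strictly positive on the entire admissible interval, so $J_4$ is strictly convex there. Strict convexity guarantees that any interior stationary point is the unique global minimizer, which is exactly the logical structure invoked in the two preceding lemmas. To secure existence I would inspect the sign of $\partial J_4/\partial p$ at the endpoints: as $p\to0^+$ the numerator $Bp^5+3Ap-4A$ tends to $-4A<0$ while $p^5\to0^+$, so the derivative tends to $-\infty$; at $p=1$ the numerator equals $B-A$, which is positive in the natural operating regime $0.62\,n^4a^4>\pi^4R^4$ (i.e.\ the network is not degenerately small). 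The resulting sign change forces a root in $(0,1)$, and convexity makes that root unique.

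The main obstacle, and the genuine departure from the earlier lemmas, is producing the explicit optimizer from the first-order condition $Bp^5+3Ap-4A=0$. This is a quintic in $p$, for which no general solution in radicals exists, so an exact closed form of the kind obtained for the quadratic in Lemma \ref{lemma1} is unavailable. I would resolve this exactly as the cubic was resolved in Lemma \ref{lemma2}: because the optimum $p$ is small, the linear term $3Ap$ is negligible beside the constant $4A$, and discarding it collapses the condition to $Bp^5\approx4A$. This yields the approximate optimizer $p=\bigl(4\pi^4R^4/(0.62\,n^4a^4)\bigr)^{1/5}$, which matches the form of (\ref{38}) and lies in $(0,1)$ under the same operating regime. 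The only point demanding care is the self-consistency of the truncation, namely verifying $3Ap\ll4A$ at the computed $p$, which follows once $p\ll1$ is confirmed for the parameter values of interest.
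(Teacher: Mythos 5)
Your proof follows essentially the same route as the paper's: your derivatives $\bigl(Bp^5+3Ap-4A\bigr)/p^5$ and $A(20-12p)/p^6$ are exactly the paper's (\ref{48}) and (\ref{49}) after clearing constants, the strict-convexity argument on $0<p<1$ is the same, and your approximate optimizer coincides with (\ref{50}), since $(4A/B)^{1/5}=\bigl(200\pi^4R^4/(31a^4n^4)\bigr)^{1/5}$. Where you differ is in rigor, and in your favor. First, the paper asserts that (\ref{50}) is ``the real root'' of (\ref{48}), but the first-order condition is a quintic with a nonvanishing linear term, so (\ref{50}) is only the approximate root obtained by discarding $3Ap$ against $4A$ --- precisely the truncation you make explicit and justify with the self-consistency check $p\ll1$; the paper concedes the approximation only later, in Section \ref{Validation}, where (\ref{26}), (\ref{38}) and (\ref{50}) are described as values found ``with some approximations.'' Second, you supply an existence argument the paper omits: the derivative tends to $-\infty$ as $p\to0^+$ and is positive at $p=1$ whenever $0.62\,n^4a^4>\pi^4R^4$ (easily satisfied in the paper's operating regime), so a sign change plus strict convexity yields a unique interior stationary point. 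Both additions are correct and make the lemma's proof tighter than the published one.
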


\begin{proof} 
The first and second derivative of (\ref{47}) becomes
\begin{equation} \label{48}
\frac{\partial J_4}{\partial p}=\frac{31a^4n^4p^5+150\pi^4R^4p-200\pi^4R^4}{50n^3p^5}
\end{equation}
and
\begin{equation} \label{49}
\frac{\partial^2 J_4}{\partial p^2}=\frac{4\pi^4R^4(5-3p)}{n^3p^6}
\end{equation}
respectively. It is worth emphasizing that (\ref{49}) is greater than $0$ for $0<p<1$ suggesting that the function in (\ref{47}) is strictly convex. Then, the real root of (\ref{48}), which is equal to
\begin{equation} \label{50}
p=\left(\frac{200\pi^4R^4}{31a^4n^4}\right)^{1/5}
\end{equation}
constitutes the global unique minimal point.
\end{proof} 

As a consequence, the optimum number of fog nodes is equal to 
\begin{equation} \nonumber
n_1=\left(\frac{200\pi^4R^4n}{31a^4}\right)^{1/5}.
\end{equation}
Let's assume that the number of fog nodes is given as $\textit{a priori}$ information, i.e., $n_1$ is known. Then, the optimum number of $n_0$ can be calculated as
\begin{equation}\nonumber
p=\frac{200\pi^4R^4}{31a^4n_1^4}
\end{equation}
so that
\begin{equation} \nonumber
n_0=\frac{n_1}{p}-n_1.
\end{equation}

\section{Validation of Analyses}\label{Validation}
To develop more insights about the optimum number of fog nodes in a given area for different path loss exponents, the derived closed-form objective functions in (\ref{23}), (\ref{35}), (\ref{47}) are numerically analyzed. In particular, these functions are plotted with respect to $p$ and the optimum values of $p$ that minimize the objective functions are found. The numerically derived $p$ values are compared with the values found analytically with some approximations in (\ref{26}), (\ref{38}) and (\ref{50}). Following that, the optimum number of fog nodes and the average number of end devices that are controlled by a single fog node are specified for a given $n$. Lastly, the improvement of the average SINR due to the optimized number of fog nodes is quantified in terms of data rate that can trivially affect the latency considering the transmission delay.

In our network model, a cloud covers a large area, and thus $a$ is selected as $50$ km, which corresponds to an area of $10000$ km$^2$. On the other hand, each local fog network is responsible for a relatively small area. The radius of the circular fog network is taken as $R=0.0765a$ without any loss of generality, which corresponds to one-tenth of the average distance between the fog node and cloud, that covers an area of roughly $45$ km$^2$. The analysis is repeated for a total number of $200$, $400$ and $800$ nodes for one fog network and $\alpha=1$, $\alpha=2$ and $\alpha=4$.

In the first case, the objective function for $\alpha=1$ in (\ref{23}) is obtained for $n=200$, $n=400$ and $n=800$ as depicted in Fig. \ref{fig:BPP_alpha1}. Notice that as proven in Lemma \ref{lemma1}, each curve has a unique minimum point, and this point specifies the optimum number of fog nodes. Note that the ratio of fog nodes to the total number of nodes decreases with increasing $n$. This suggests that more devices should be handled by fog nodes in ultra-densely deployed networks. To be more specific, Table 1 gives the optimum number of fog nodes and specifies the average end devices in each fog node for different values of $n$. Furthermore, the minimum value of $p$ found in (\ref{26}) for $\alpha=1$ is compared with the one that is obtained with numerically in Table \ref{tab:Table1}.
\begin{figure} [!h] 
\centering 
\includegraphics [width=3.5in]{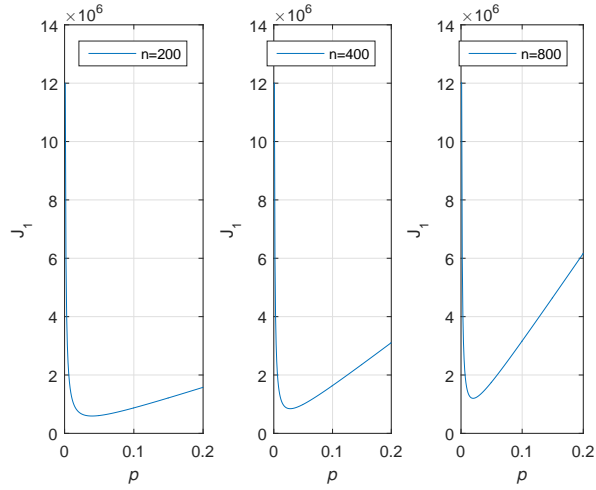}
\caption{The objective function for $\alpha=1$ in terms of $p$}\label{fig:BPP_alpha1}
\end{figure}

\begin{table}
\centering
\caption{The comparison of analytical and numerical results and the optimum number of fog nodes with average end devices for $\alpha=1$} \label{tab:Table1}
\begin{tabular}{ |p{1.25cm}||p{1.25cm}|p{1.25cm}| p{1.25cm}|p{1.25cm}|} 
\hline
 & Analytical value of $p$ in (\ref{26}) & Numerical value of $p$ & Average number of end devices & Average optimum number of fog nodes \\
\hline
$n = 200$ & $0.0396$ & $0.04$ & $24.25$ & $7.92$\\
$a = 400$ & $0.028$ & $0.028$ & $34.71$ & $11.2$\\
$a = 800$ & $0.0198$ & $0.02$ & $49.5$ & $15.84$\\
\hline
\end{tabular}
\end{table}

A simulation is performed to compare the average data rate for the optimized and unoptimized number of fog nodes based on the values in Table \ref{tab:Table1}. Accordingly, it is assumed that there are $200$ nodes within the area of interest without any loss of generality. For the former case, i.e., the optimized number of fog nodes, there are nearly $8$ nodes at the fog layer and $192$ nodes at the thing layer. For the latter unoptimized case, we randomly generate different number of fog nodes and take their average. Then, the ratio of the average data rate for the optimized number of fog nodes $R_{opt}$ and the average data rate for the unoptimized number of fog nodes $R_{unopt}$ becomes as illustrated in Fig. \ref{fig:ple1avgDataRate}. As one can observe, the ratio slightly decreases for higher signal-to-noise ratio (SNR) values, however, there is still significant advantage of optimizing the number of fog nodes. Note that this benefit will grow for high bandwidths. Additionally, it is straightforward to see this effect in latency considering transmission delay, which is obtained by dividing the packet size with data rate.
\begin{figure} [!h] 
\centering 
\includegraphics [width=3.5in]{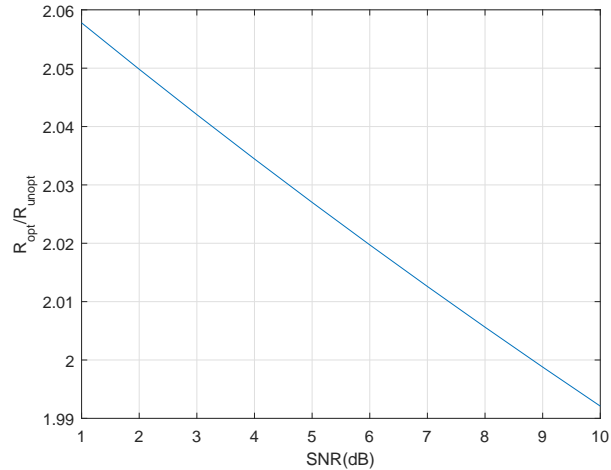}
\caption{The advantage of the optimized number of fog nodes in terms of the average data rate when $\alpha=1$.}\label{fig:ple1avgDataRate}
\end{figure}

The same experiment is repeated for a path loss exponent of $\alpha=2$ that represents the free space path loss, whose closed-form objective function is given by (\ref{35}). The results for $\alpha=2$ reveals similar characteristics as with $\alpha=1$ and is depicted in Fig. \ref{fig:BPP_alpha2}. Here, the numerically determined $p$ is nearly the same as the analytical result in (\ref{38}) as clarified in Table \ref{tab:Table2}, which also introduces the optimum number of fog nodes in one fog network and the average number of end devices in one fog node. One more important point is that the optimum number of fog nodes inside a fog network decreases with respect to the previous case, i.e., $\alpha=1$ for the same number of $n$. Intuitively, for nodes whose signal power falls more rapidly, the fog nodes that are further from the cloud with respect to the other fog nodes will significantly decrease the performance. This means that it may be more advantageous to send the packets to the closest fog node instead of being a fog node that decreases the overall optimum number of fog nodes. In this case, the average number of end devices controlled by a fog node increases. This result indicates that more computational power is necessary for the fog nodes in case of channels with high path loss exponents.
\begin{figure} [!h] 
\centering 
\includegraphics [width=3.5in]{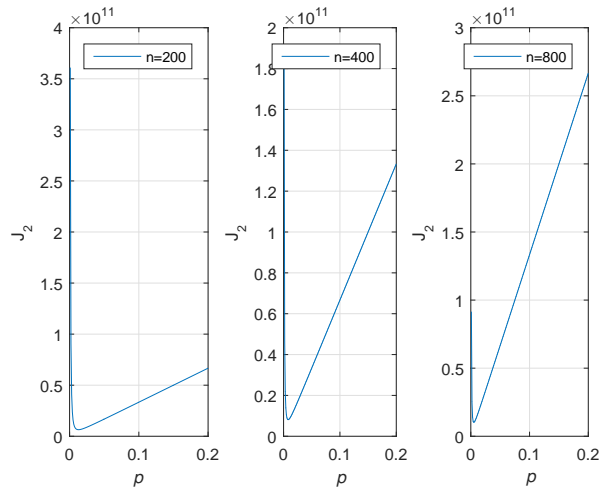}
\caption{The objective function for $\alpha=2$ in terms of $p$}\label{fig:BPP_alpha2}
\end{figure}

\begin{table}
\centering
\caption{The comparison of analytical and numerical results and the optimum number of fog nodes with average end devices for $\alpha=2$} \label{tab:Table2}
\begin{tabular}{ |p{1.25cm}||p{1.25cm}|p{1.25cm}| p{1.25cm}|p{1.25cm}|} 
\hline
 & Analytical value of $p$ in (\ref{38}) & Numerical value of $p$ & Average number of end devices & Average optimum number of fog nodes \\
\hline
$n = 200$ & $0.0129$ & $0.013$ & $76.51$ & $2.58$\\
$a = 400$ & $0.0082$ & $0.008$ & $120.95$ & $3.28$\\
$a = 800$ & $0.0051$ & $0.005$ & $195.07$ & $4.08$\\
\hline
\end{tabular}
\end{table}

The ratio of the average data rate between the optimized and unoptimized number of fog nodes are evaluated for $\alpha=2$ as well in Fig. \ref{fig:ple2avgDataRate} when the total number of nodes is selected as $200$. Although the ratio of $R_{opt}/R_{unopt}$ shows little decreases with incremental SNR, it is higher than the case of $\alpha=1$. In fact, the data rate approximately doubles once the number of fog nodes is optimized, which is quite important in future wireless networks considering the increasing user demands.
\begin{figure} [!h] 
\centering 
\includegraphics [width=3.5in]{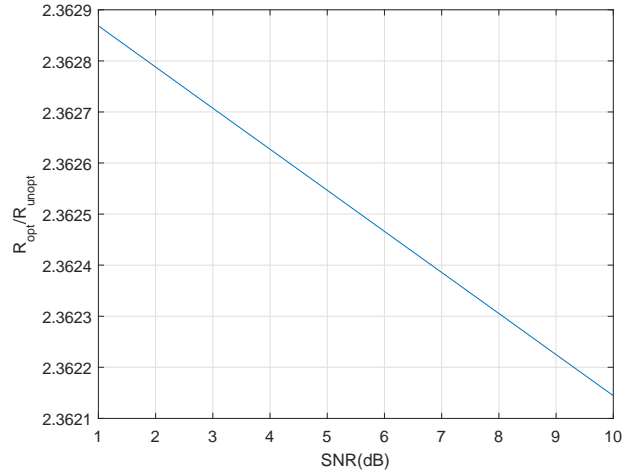}
\caption{The advantage of the optimized number of fog nodes in terms of the average data rate when $\alpha=2$.}\label{fig:ple2avgDataRate}
\end{figure}

Lastly, the channels that are subject to shadowing and fading are considered to determine the optimum number of fog nodes regarding the average data rate. Here, the path loss exponent is selected as $\alpha=4$ as in the derived objective function in (\ref{47}). The results for this case are illustrated in Fig. \ref{fig:BPP_alpha4} and Table \ref{tab:Table3}. Compared to $\alpha=1$ and $\alpha=2$, the fewer number of fog nodes, each of which has higher number of end devices, are needed for $\alpha=4$. It can be deduced that providing services to the end devices by a fog node is more challenging for channels with higher path loss exponents, because the increase in the number of end devices complicates the data processing and resource allocation.  
\begin{figure} [!h] 
\centering 
\includegraphics [width=3.5in]{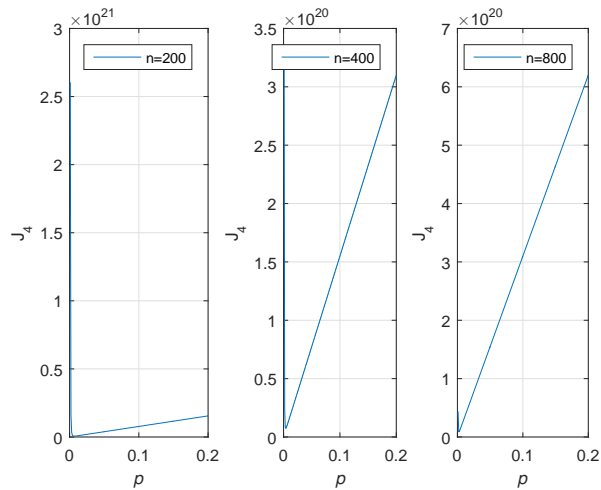}
\caption{The objective function for $\alpha=4$ in terms of $p$}\label{fig:BPP_alpha4}
\end{figure}

\begin{table}
\centering
\caption{The comparison of analytical and numerical results and the optimum number of fog nodes with average end devices for $\alpha=4$} \label{tab:Table3}
\begin{tabular}{ |p{1.25cm}||p{1.25cm}|p{1.25cm}| p{1.25cm}|p{1.25cm}|} 
\hline
 & Analytical value of $p$ in (\ref{50}) & Numerical value of $p$ & Average number of end devices & Average optimum number of fog nodes \\
\hline
$n = 200$ & $0.0067$ & $0.007$ & $148.25$ & $1.34$\\
$a = 400$ & $0.0038$ & $0.004$ & $262.15$ & $1.52$\\
$a = 800$ & $0.0022$ & $0.002$ & $453.54$ & $1.76$\\
\hline
\end{tabular}
\end{table}

When it comes to specifying the average data rate for the optimized number of fog nodes, there is a considerable enhancement with respect to the unoptimized one as depicted in Fig. \ref{fig:ple4avgDataRate}. More specifically, the average data rate increases almost be an order of magnitude if the number of fog nodes is optimized. Notice that the improvement is nearly the same for different SNR values. This emphasizes that it is much more important to optimize the number of nodes in the fog layer and the thing layer for channels with higher path loss exponents.
\begin{figure} [!h] 
\centering 
\includegraphics [width=3.5in]{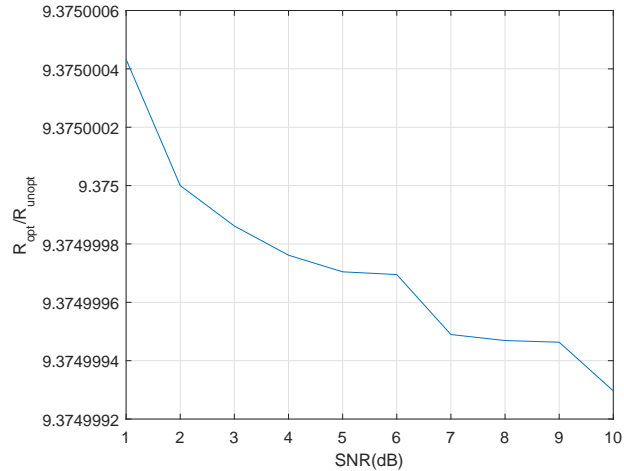}
\caption{The advantage of the optimized number of fog nodes in terms of the average data rate when $\alpha=4$.}\label{fig:ple4avgDataRate}
\end{figure}

The overall results imply that the boost in the number of total nodes leads to a moderate increase in the number of fog nodes, whereas there is a significant increase in the number of end devices. Furthermore, fewer fog nodes are sufficient in case of severe fading. This indicates that each fog node must have more resources to manage the needs of users when there are many nodes in channels with high path loss exponents. This fact indicates the importance of cooperation and virtualization in fog networks. An asymptotical situation can be to have a large number of path loss exponent channels that results in a single fog node with many end devices. As a result, the channels with smaller path loss exponent have a tendency to have more fog nodes implying more distributed networking while the channels with higher path loss exponents tend to more centralized solution to maximize the data rate. 

\section{Additional Benefits and Future Work}\label{Future}
Optimizing the number of fog nodes brings with it numerous benefits for the design of $\textit{cloud-fog-thing}$ networks. Regarding the cloud layer of this architecture, whose primary advantage comes from virtualization, the problem of underutilized or overutilized virtual machines is one of the problems that decreases the efficiency of virtualization. In general, to balance the loads in virtual machines that run on physical machines, prediction based algorithms that observe the past statistics are employed \cite{Xiao}. However, if the load dynamically changes, i.e., it is time-varying, these sorts of algorithms do not give accurate estimate. At this point, one may want to use the information of the dynamically optimized number of fog nodes derived in this paper while designing virtual machines, which can be viewed as fog-aided cloud virtualization. The basic idea here is to associate the virtual machines in the cloud with different fog nodes whose optimum number is determined according to the total number of nodes in the area of interest, and each node has a balanced average load. That is, if virtual machines were created such that each virtual machine became responsible for an equal number of fog nodes, all virtual machines would have more balanced loads. This could minimize the number of physical machines and lead to more efficient green computing as well. Notice that each fog node needs to communicate with a virtual machine in the cloud for many reasons, e.g., to update its cache, manage its resources more efficiently, or send some portion of data coming from end devices to the virtual machines for further processing. The details of this subject will be handled in future work.

Fog layer design requires not only knowing the optimum number of fog nodes but also finding the locations of fog nodes. That is, which nodes in the network are upgraded as fog nodes, among the many alternatives has to be determined. Clustering algorithms in machine learning can be used to address the locations of the fog nodes so that cluster-heads can become the fog nodes. One of the widely used clustering algorithm in machine learning is the $K$-means clustering algorithm based on the principle of minimizing inter-cluster distances \cite{Bishop}. Accordingly, the geographical locations of the potential fog nodes can be considered as a data set that can be clustered with $K$-means clustering algorithm so that the closest nodes to the leader of each cluster, i.e., cluster-heads, can be upgraded as fog nodes. Despite its simplicity and efficiency, the major drawback of this algorithm is in the determination of the value of $K$. It is not clear how one should select $K$, and there are only heuristics instead of mathematical analysis \cite{Jain},\cite{Tibshirani}. As a promising solution, the stochastic geometry analysis given in this paper can fulfill this gap. Specifically, the value of $K$, which is the optimum number of fog nodes, can be analytically obtained as $K=np$ where $n$ is given as $\textit{a priori}$ information and $p$ is derived as a closed-form expression in (\ref{26}), (\ref{38}) and (\ref{50}). The details of this subject will be explored in future work as well.

Quality of experience of users residing in the thing layer is highly related to the efficient caching mechanism in the fog layer. A recent paper reveals that the performance of caching depends on both the capacity of the front-haul network between the fog and the cloud, as well as the caching resources in the fog nodes for $\textit{cloud-fog-thing}$ network \cite{Tandon}. This means that even upgrading all of the nodes to fog nodes to exploit the unused resources for the sake of caching is not sufficient to have better caching performance. Therefore, the optimized number of fog nodes that can improve the average data rate within the network will enhance the front-haul capacity and affect the caching. It is worth noting that evaluating the caching performance quantitatively in terms of the number of fog nodes is a good research problem.

\section{Conclusions}\label{Conclusions}
A promising network model for 5G and beyond wireless applications that ensures the processing of big data based on the complementary nature of cloud and fog is studied in this paper. Specifically, the number of optimum fog nodes, which is one of the unclear points in the $\textit{cloud-fog-thing}$ architecture, is clarified in this paper using the tools of stochastic geometry. Our optimization enhances the average SINR in the network, and thus maximizes the average data rate and minimizes the transmission delay. In particular, it is quite meaningful and important to maximize the average data rate especially for the networks that require big data processing. Our results show that optimizing the number of fog nodes significantly improves the average data rate. To illustrate, the average data rate doubles and increases by almost an order of magnitude for the free-space path loss channels, and shadowing and fading channels, respectively. Indeed, having more than or less than the optimum number of fog nodes degrades the average data rate, and its effect becomes greater for the channels with high path loss exponents. Furthermore, the optimum number of fog nodes decreases for high path loss exponents channels indicating that fog nodes must be selected among the nodes that have the highest computational power for these channels. Consequently, this paper facilitates the integration of theoretical results on fog networking to practical networks. Within this scope, the optimum number of fog nodes is found. Our results may be quite useful in the design of cloud virtualization, while finding the locations of fog nodes with $K$-means clustering algorithm and having enhanced caching performance.

\end{document}